\DeclareMathOperator{\K}{\mathit{K}\,}
\newcommand{\0}{\mathbf{0'}}
\DeclareMathOperator{\KH}{\mathit{K}^\0\,}
\newcommand{\logg}{\log^{(2)}}
\newcommand{\Plsc}{\mathcal{P}^\uparrow}
\theoremstyle{plain}
\newtheorem{theorem}{Theorem}
\newtheorem*{theorem*}{Theorem}
\newtheorem{proposition}[theorem]{Proposition}
\newtheorem{lemma}[theorem]{Lemma}
\newtheorem{corollary}[theorem]{Corollary}
\theoremstyle{remark}
\newenvironment{keywords}{\noindent\sc Keywords:\rm}{\mbox{}}
\begin{document}

\title{Upper-semicomputable sumtests for lower semicomputable semimeasures}
\author{Bruno Bauwens\footnote{%
    LORIA, Universit\'e de Lorraine. 
    This research was supported by a specialisation grant form 
    the Flemish agency for innovation through science and technology (IWT-Vlaanderen).
    I'm grateful for social and practical support from the SYSTMeS and neurology 
    departments of Ghent University by Georges Otte, Luc Boullart, Bart Wyns and Patrick Santens 
    in the period 2008--2010.
    The research was finished with support from Universit\'e de Lorraine in November 2013.
    } \footnote{
    The results have appeared in my Phd thesis~\cite{BauwensPhd} (with cumbersome proofs).
    I thank Sebastiaan Terwijn for sharing some proof attempts for results below and 
    Alexander (Sasha) Shen for the simplified proof of Theorem~\ref{th:u_hierarchy}, and 
    so much interesting discussion about Kolmogorov complexity.
    }
}

\date{}

\maketitle  

\begin{abstract}
  A sumtest for a discrete semimeasure $P$ is a function $f$ mapping 
  bitstrings to non-negative rational numbers such that
  \[
    \sum P(x)f(x) \le 1 \,.
  \]
  Sumtests are the discrete analogue of Martin-L\"of tests.
%
  The behavior of sumtests for computable $P$ seems well understood,
  but for some applications lower semicomputable $P$ seem more appropriate. 
  In the case of tests for independence, 
  it is natural to consider upper semicomputable tests (see
  [B.Bauwens and S.Terwijn, Theory of Computing Systems 48.2 (2011): 247-268]).

  In this paper, we characterize upper semicomputable sumtests relative to any lower semicomputable semimeasures 
  using Kolmogorov complexity.
  It is studied to what extend such tests are pathological: can
  upper semicomputable sumtests for $m(x)$ be large?
  It is shown that the logarithm of such tests does not exceed $\log |x| + O(\logg |x|)$ (where $|x|$ denotes  
  the length of $x$ and $\logg = \log\log$) and that this bound is tight, i.e. 
  there is a test whose logarithm exceeds $\log |x| - O(\logg |x|$) infinitely often.
  Finally, it is shown that for each such test $e$ 
  the mutual information of a string with the Halting problem is at least $\log e(x)-O(1)$;
  thus $e$ can only be large for ``exotic'' strings.
\end{abstract}

\begin{keywords}  Kolmogorov complexity -- universal semimeasure -- sumtest 
\end{keywords}

\section{Introduction} 
\label{sec:introduction}

A (discrete)  {\em semimeasure} $P$ is a function from strings to non-negative rational numbers such that 
$\sum P(x) \le 1$. 
A {\em sumtest} $f$ for a semimeasure $P$ is a function from bitstrings 
to non-negative reals such that 
\[
 \sum_x P(x)f(x) \le 1\,.
\]
Sumtests provide a rough model for statistical significance testing 
for a hypothesis about the generation of data~\cite{LiVitanyi, statCourse, Young}. 
This hypothesis should be sufficiently specific so that it determines 
a unique semimeasure $P$ describing 
the generation process of observable data $x$ in a statistical experiment.
The value $f(x)$ plays the role of significance 
in a statistical test. 
The condition implies that the $P$-probability of observing $x$ 
such that $f(x) \ge k$ is bounded by $1/k$.
If $f(x)$ is high, it is concluded that either a rare event has occurred or the 
hypothesis is not consistent with the generation process of the data.

For many hypotheses, such as the hypotheses that two observables are independent,
not enough information is available to infer a unique probability distribution.
In such a case, one might consider a set of semimeasures that are consistent with the hypothesis. 
We say that $m^{\mathcal H}$ is {\em universal} in a set of semimeasures if $m^{\mathcal H}$ 
is in the set and if for 
for each $P$ in the set there is a constant $c$ such that $P \le c\cdot m^{\mathcal H}$. 
(If $P \le c\cdot Q$ for some $c$, we say that $Q$  {\em dominates} $P$.)
It might happen that the class of lower semicomputable $P$ consistent with a 
hypothesis has a universal element $m^{\mathcal H}(x)$ and that each such element satisfies
\[
P(x) \le O\left(2^{-K(P)}\right)m^{\mathcal H}(x) \,,
\]
here, $\K(P)$ is the minimal length of a program that enumerates $P$ from below.
This happens for the hypothesis of independence of two strings, or directed influences in time
series (see~\cite[Proposition 2.2.6]{BauwensPhd}).
In this case, if observed data results in a high value of a sumtest for $m^{\mathcal H}$, 
it can be concluded that either: 
\begin{itemize}
 \item a rare event has occurred,
 \item the hypothesis is not consistent with the 
  generation of the data,
 \item the data was generated by a process that is only consistent with 
   semimeasures of high Kolmogorov complexity.
\end{itemize}

Unfortunately, the use of approximations of universal tests 
seems not to be practical, and this interpretation
is rather philosophic.
However, one can raise the question whether sequences of improving 
tests (for example general tests for independence) 
reported in literature tend to approach some ideal limit. 
This motivates the question whether there exist
large sumtests in some computability classes, and 
whether they have universal elements.

Let $\Plsc$ be the class of lower semicomputable semimeasures.
In algorithmic information theory, it is well known that this class has
a universal element~$m$ \cite{LiVitanyi}, and $m$ can be characterized in terms of 
prefix Kolmogorov complexity: $\log m(x) = \K(x) + O(1)$ for all~$x$ (see~\cite{LiVitanyi} for more
background on Kolmogorov complexity).
For all computable $P$ it is also known that:
\begin{itemize}
  \item 
    no universal element exists in the class of computable tests,
  \item 
    no upper semicomputable test exists that dominates all lower semicomputable ones, 
  \item 
    a universal sumtest in the class of lower semicomputable tests is 
    \[
      m(x)/P(x) \,.
    \]
\end{itemize}
In \cite{BauwensTerwijn}, these statements are studied for
lower semicomputable $P$. In this case all of the results above 
become false: some semimeasures have a computable universal test 
(for example $f(x) = 1$ is a universal test for~$m$ even among the lower semicomputable tests), 
upper semicomputable tests can exceed lower semicomputable ones, 
(every $P \in \Plsc$ has an unbounded upper semicomputable test~\cite[Proposition 5.1]{BauwensTerwijn}, thus also~$m$),
and for some $P$ no universal lower semicomputable test exists~\cite[Proposition
4.4]{BauwensTerwijn}.

Consider the hypothesis of independence for bivariate
semimeasures. The corresponding class of semimeasures for this hypothesis is $P(x)Q(y)$ 
where $P$ and $Q$ are univariate semimeasures. The subset of 
lower semicomputable semimeasures has a universal element given by 
$m(x)m(y)$. A sumtest relative to this semimeasure can be called an independence test.
It is not hard to show that any lower semicomputable independence tests is bounded 
by a constant\footnote{
   Suppose that a sumtest $f$ for $m(x)m(y)$ is unbounded. 
   For each $k$ one can search a pair $(x,y)$ such that $f(x,y) \ge 2^k$. For the first such pair
   that appears we have $\log m(x) = \K(x)  \le K(x,y)  \le K(k) \le O(\log k)$ up to $O(1)$ terms, 
   and similar for $m(y)$. This implies $f(x,y) m(x)m(y) \ge 2^{k - O(\log k)}$, 
   which contradicts the condition in the definition of sumtest.
   }.
In~\cite{BauwensTerwijn} it is shown that upper semicomputable independence tests 
exist whose logarithm equal $n + O(\log n)$ for all $n$ and pairs $(x,y)$ of strings of length $n$.
Moreover, there exist a generic upper semicomputable test $u_h$ defined for all
computable $h$, and this test is increasing in $h$: if $h \le g$
then $u_h \le u_g$.
The tests obtained in this way somehow ``cover'' all tests:
every upper semicomputable test is dominated by $u_h$ for some computable $h$.
Moreover, $u_h$ has a characterization in terms of (time bounded) Kolmogorov complexity, and 
for fixed $(x,y)$ and increasing $h$, the value $\log u_h(x,y)$ approaches algorithmic mutual
information $I(x;y) = \K(x) + \K(y) - \K(x,y)$.
Unfortunately, no universal upper semicomputable sumtest exist.

The first result of the paper is to define such a generic test $u_h$ for 
all lower semicomputable semimeasures in terms of Kolmogorov complexity. This generalizes 
the result mentioned above. The remaining goal of the paper is to
investigate whether upper semicomputable tests are pathological, 
in particular can such tests for the universal semimeasure $m$ be large? 
This would imply that these tests identify ``structure'' which can not be identified by 
compression algorithms. 

We show that there are sumtests for $m$
for which the logarithm exceeds $\log |x| - O(\log\log |x|)$, 
but for no test its logarithm exceeds $\log |x| + O(\log |x|)$.
This is small compared to the logarithm of the independence tests discussed above, 
which equal $n + O(\log n)$ for some $(x,y)$ of length $n$. 
We show that the logarithm of each test is bounded by the mutual information with the Halting problem $\0$ 
given by $\K(x) - \KH(x)$, up to an additive constant (that depends on the test).
Hence, strings $x$ for which such a test is large, 
are unlikely to be produced in a statistical experiment.
We also show that there is no universal upper semicomputable test for~$m$.


\section{A generic upper semicomputable sumtest}
\label{sec:hierarchy}

Let $P$ be a lower semicomputable semimeasure. For every computable two argument function $h$, 
we define an upper semicomputable sumtest $u_h$ for $P$. For increasing $h$ these sumtests are increasing
and we show that for each upper semicomputable sumtest $f$ there exists a computable $h$ such 
that $u_h$ exceeds $f$ up to a multiplicative constant.

In our construction we use $m(\cdot|\cdot)$, 
which is a bivariate function that is universal for all lower semicomputable conditional
semimeasures. Let $m_t(\cdot|\cdot)$ and $P_t(\cdot)$ represent approximations of $m(\cdot|\cdot)$ and
$P(\cdot)$ from below.  

For any computable $h$, let
$$
u_{h}(x) = \inf_s \left\{\frac{m_{h(x,s)}(x|s)}{P_s(x)}\right\}.
$$

\begin{theorem} \label{th:u_hierarchy}
  $u_{h}$ is an upper semicomputable sumtest for $P$. For any upper semicomputable sumtest $e$ for~$P$, 
  there exist a $c$ and a computable $h$ such that $u_{h} \le c \cdot e$.
\end{theorem}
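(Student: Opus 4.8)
The plan is to prove the two assertions separately. For the first, upper semicomputability of $u_h$ is immediate: for fixed $x$ and $s$ the quantity $m_{h(x,s)}(x|s)/P_s(x)$ is a computable rational (wherever $P_s(x)>0$), so $u_h(x)=\inf_s(\cdot)$ is a nonincreasing limit of the computable partial minima $\min_{s\le N} m_{h(x,s)}(x|s)/P_s(x)$, hence upper semicomputable. For the sumtest inequality I would fix a stage $s$ and use that $u_h(x)$, being an infimum, is at most the single term $m_{h(x,s)}(x|s)/P_s(x)$; multiplying by $P_s(x)$ gives $P_s(x)\,u_h(x)\le m_{h(x,s)}(x|s)\le m(x|s)$ (this also holds trivially when $P_s(x)=0$). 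Summing over $x$ with $s$ held fixed and using that $m(\cdot|s)$ is a conditional semimeasure yields $\sum_x P_s(x)\,u_h(x)\le \sum_x m(x|s)\le 1$ for every $s$. Since $P_s(x)\uparrow P(x)$, monotone convergence in $s$ then gives $\sum_x P(x)\,u_h(x)\le 1$, so $u_h$ is a sumtest for every computable $h$.

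For the domination $u_h\le c\cdot e$, the idea is to choose $h$ so that, for each $x$, some stage $s$ forces the bracketed term below $e(x)$. I would fix a computable, nonincreasing rational approximation $e_0(x)\ge e_1(x)\ge\cdots$ with $\lim_t e_t(x)=e(x)$, which exists because $e$ is upper semicomputable. Define $h(x,s)$ to be the largest $t\le s$ with $m_t(x|s)\le e_s(x)\,P_s(x)$; this is well defined since $m_0(x|s)=0$ lies below the right-hand side, and it is computable by a bounded search over $t\le s$ using the approximations $m_t$, $e_s$, $P_s$. By construction $m_{h(x,s)}(x|s)\le e_s(x)\,P_s(x)$, so whenever $P_s(x)>0$ the corresponding term satisfies $m_{h(x,s)}(x|s)/P_s(x)\le e_s(x)$.

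It then remains to drive the infimum down to $e(x)$. For $x$ with $P(x)>0$ one has $P_s(x)>0$ for all large $s$, so $u_h(x)=\inf_s(\cdot)\le e_s(x)$ for every such $s$; since $u_h(x)$ is a fixed number bounded by $e_s(x)$ for all large $s$ and $e_s(x)\downarrow e(x)$, we conclude $u_h(x)\le e(x)$. For $x$ with $P(x)=0$ every $P_s(x)$ vanishes, the cap forces $m_{h(x,s)}(x|s)=0$, and under the usual convention (a term with zero numerator contributes $0$) we again get $u_h(x)=0\le e(x)$. Thus $c=1$ already works. The main obstacle I anticipate is not the inequality itself but the computability bookkeeping: guaranteeing that $h$ is total and computable despite $m_t$ approximating $m(x|s)$ from below while $e_t$ approximates $e$ from above (resolved by capping the search at $t\le s$), and handling the degenerate stages with $P_s(x)=0$ so that the infimum is genuinely controlled by the large-$s$ terms rather than by undefined ratios.
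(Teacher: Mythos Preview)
Your argument for the first assertion is correct and is essentially the paper's: bound each $P_s(x)\,u_h(x)$ by $m_{h(x,s)}(x|s)$, sum over $x$ for fixed $s$, then let $s\to\infty$.

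For the second assertion, however, you have proved the inequality exactly as printed, and the printed direction is a typo. The intended conclusion is $e\le c\cdot u_h$, not $u_h\le c\cdot e$; this is what the paper's own proof actually establishes, and it is the direction invoked everywhere afterwards (Corollary~\ref{cor:high_u_muchHaltingInformation} reduces to the case $e=u_h$, the proof of Theorem~\ref{th:upperbound} opens with ``it suffices to show the theorem for $u_h$'', and Corollary~\ref{cor:nouniversal} picks $h$ with $e\le O(u_h)$). The direction you proved is essentially vacuous: it never uses the universality of $m(\cdot|\cdot)$, and under your own convention $m_0=0$ your construction collapses to $h\equiv 0$, giving $u_h\equiv 0\le e$ for any nonnegative~$e$. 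That it goes through with $c=1$ and without touching universality is the tell.

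The paper's argument for the intended direction hinges precisely on that universality. One picks approximations $e_s\downarrow e$ with computably bounded support so that $\sum_x P_s(x)e_s(x)\le 1$ for each $s$. Then $(x,s)\mapsto P_s(x)e_s(x)$ is a lower semicomputable conditional semimeasure, hence dominated by $m(\cdot|\cdot)$: there is a single constant $c$ with $P_s(x)e_s(x)\le c\,m(x|s)$ for all $x,s$. Now define $h(x,s)$ to be the first approximation stage at which this inequality becomes visible; this makes $h$ total and computable and yields $e_s(x)\le c\,m_{h(x,s)}(x|s)/P_s(x)$ for every $x,s$, whence $e(x)\le c\,u_h(x)$ after taking the infimum over~$s$.
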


 {\em Remark.} Let $t$ be a number. We can define a generic test $u_h$ using 
time bounded Kolmogorov complexity, given by 
\[
K^t\,(x) = \min \left\{ |p|:U(p) \text{ outputs $x$ and halts in at most $t$ steps} \right\}. 
\]
Indeed, we simply fix the approximation $m_t(x|s)$ to be $2^{-K^t(x|s)}$, 
and by the conditional coding theorem, this defines a universal conditional
semimeasure $m(\cdot|\cdot)$. 

\begin{proof}\hspace{-1mm}\footnote{
    This simplified proof was suggested by Alexander Shen.
    }\hspace{1mm}
  Clearly, $u_{h}$ is upper semicomputable, so we need to show it is a sumtest.
  For each fixed $s$, the function $m_{h(x,s)}(x|s)/P_s(x)$ is a sumtest for $P_s$, 
  and $u_h(x)$ is not larger; thus for all $s$:
  \[
    \sum_x u_h(x)P_s(x) \le 1\,. 
  \]
  This implies that the relation also holds in the limit, thus $u_h$ is a sumtest for~$P$.

  For the second claim of the lemma, note that we can 
  choose an approximation $e_s$ (from above) of $e$  such that 
  $e_s$ has computably bounded support and $\sum_x P_s(x)e_s(x) \le 1$ for all~$s$. 
  By universality of $m(\cdot|\cdot)$, this implies that there exist a $c$ such
  that $P_s(x)e_s(x) \le c\cdot m(x|s)$, and this $c$ does not depend on $x$ and~$s$.
  We can wait for the first stage in the approximation of $m(x|s)$ for which this equation becomes
  true and let this stage define the function $h(x,s)$.
  This implies $e_s(x) \le c\cdot m_{h(x,s)}(x|s)/P_s(x)$ for all $x$ and $s$, thus $e(x) \le c\cdot u_h(x)$. 
  $h(x,s)$ is defined for all $x$ and $s$, thus $h$ is computable.
\end{proof}

Let $\KH(x)$ be the Kolmogorov complexity of $x$ on a machine with an oracle for to the Halting problem 
that is optimal.
\begin{corollary}\label{cor:high_u_muchHaltingInformation}
  If $e$ is an upper semicomputable sumtest for a universal semimeasure, then
  $\log e(x) \le \K(x) - \KH(x) + O(1)$. 
  (The constants implicit in the $O(\cdot)$ notation depend on $e$.)
\end{corollary}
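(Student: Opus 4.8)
The plan is to reduce the statement to the coding theorem relativized to the halting oracle $\0$. Write $m$ for the universal semimeasure; the sumtest condition for $e$ is $\sum_x m(x)e(x)\le 1$, so the product $\mu(x):=m(x)e(x)$ is itself a semimeasure. If I can show that $\mu$ is lower semicomputable relative to $\0$, then the $\0$-relativized coding theorem gives $\KH(x)\le -\log\mu(x)+O(1)$, and since $-\log m(x)=\K(x)+O(1)$ this rearranges to exactly $\log e(x)\le \K(x)-\KH(x)+O(1)$, with the additive constant absorbing the (fixed) algorithm that computes $\mu$ relative to $\0$.

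So the heart of the argument is to promote $\mu$ from a product of a lower- and an upper-semicomputable function to something $\0$-lower-semicomputable. First I would observe that both factors are in fact computable with oracle $\0$, uniformly in $x$. Indeed, $m$ is the increasing limit of a computable sequence $m_s(x)$; to approximate $m(x)$ within $2^{-n}$ it suffices to locate a stage $s$ with $\forall s'\ge s:\ m_{s'}(x)-m_s(x)<2^{-n}$, and this $\Pi^0_1$ condition is decidable by $\0$. Symmetrically, $e$ is the decreasing limit of a computable sequence $e_s(x)$, and the same $\0$-search locates a stage within $2^{-n}$ of $e(x)$. Hence $m$ and $e$, and therefore their product $\mu$, are $\0$-computable reals uniformly in $x$; in particular $\mu$ is lower semicomputable relative to $\0$. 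Together with $\sum_x\mu(x)\le 1$ this makes $\mu$ a $\0$-lower-semicomputable semimeasure.

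With this in hand the conclusion is immediate: relativizing the coding theorem to $\0$ yields a constant $c$ (depending only on the $\0$-algorithm for $\mu$, hence ultimately on $e$) with $\KH(x)\le -\log\mu(x)+c$, and substituting $-\log\mu(x)=\K(x)-\log e(x)+O(1)$ finishes the proof. I expect the only delicate point to be the uniformity in the first step: one must check that the $\0$-procedures for $m$ and $e$ run uniformly in $x$, so that a single index — hence a single additive constant — governs $\mu$, and that multiplying two $\0$-computable nonnegative reals stays within the $\0$-computable reals with controlled precision; both are routine once the passage from semicomputable to $\0$-computable is set up. Note that this route is self-contained and does not invoke the domination by $u_h$ from Theorem~\ref{th:u_hierarchy}; one could alternatively feed the explicit form of $u_h$ into the same coding-theorem step, but the product $m\cdot e$ argument is cleaner.
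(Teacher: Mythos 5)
Your proposal is correct, and it takes a genuinely different route from the paper. The paper first reduces to the generic tests: by Theorem~\ref{th:u_hierarchy} it suffices to prove the bound for $e = u_h$, after which it bounds $u_h(x)$ by $\liminf_t m_{h(x,t)}(x|t)/m_t(x)$ and invokes the nontrivial ``limit complexities'' theorem $2^{-\KH(x)} = \Theta\left(\liminf_t m(x|t)\right)$ of~\cite{limitComplexitiesRevisited}. You instead apply the coding theorem relativized to $\0$ directly to $\mu(x) = m(x)e(x)$, which the sumtest inequality makes a semimeasure. Your verification that $\mu$ is $\0$-lower-semicomputable is sound: for a monotone computable sequence of rational approximations, the Cauchy condition ``no change exceeding $2^{-n}$ after stage $s$'' is $\Pi^0_1$, hence decidable by $\0$, and the search terminates by pointwise convergence; uniformity in $x$ yields a single $\0$-index and hence a single additive constant, and the degenerate case $e(x)=0$ is vacuous (while $m(x)>0$ everywhere, since $m$ dominates, e.g., $x \mapsto c\,2^{-2|x|}$, so $-\log m(x) = \K(x)+O(1)$ is finite). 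What each approach buys: yours is self-contained modulo the routine relativization of the coding theorem — it bypasses Theorem~\ref{th:u_hierarchy} and the careful choice of approximations $e_s$ with bounded support needed there, and it actually proves the more general fact that every $\0$-lower-semicomputable semimeasure is $O\left(2^{-\KH(x)}\right)$ pointwise, of which the corollary is the special case $\mu = m\cdot e$. (A small simplification: you do not even need $\0$-computability of $m$, which is already lower semicomputable unrelativized; a product of a nonnegative lower semicomputable function with a $\0$-computable one is $\0$-lower-semicomputable.) The paper's proof, in exchange, stays inside the machinery it has already built and is more informative about \emph{why} the bound holds: it exhibits the generic test as pointwise dominated by $\liminf_t m(x|t)/m(x)$, tying the corollary to the limit-complexity characterization of $m^\0$ and to the $u_h$ hierarchy that the rest of the paper studies.
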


\begin{proof}
  It suffices to show the corollary for $e = u_h$.
  Let $m^\0(x) = 2^{-\KH(x)}$.
  We use~\cite[Theorem 2.1]{limitComplexitiesRevisited}, which states
  \[
  m^\0(x) = \Theta \left( \liminf_t m(x|t) \right) \,.
    \]
  By definition of $u_h$:
  \[
   u_h \le \liminf_t \frac{m_{h(x,t)}(x|t)}{m_t(x)} \,.
  \]
  For all but finitely many $t$, the denominator exceeds $m(x)/2$, and 
  by the theorem mentioned above, 
  $\liminf m(x|t)$ in the numerator is $O(m^\0(x))$.
  By choice of $m^\0(x)$, it equals $2^{-\KH(x)}$. Thus
  \[
  \le \liminf_t \frac{m(x|t)}{m(x)/2} \le O\left( \frac{m^\0(x)}{m(x)}\right) = O\left(
  \frac{2^{-\KH(x)}}{2^{-\K(x)}} \right) \,. 
  \qedhere
  \]
\end{proof}

\section{Upper bound for upper semicomputable tests for~$m$}

\begin{theorem}\label{th:upperbound}
  If $e$ is an upper semicomputable sumtest for $m$, then 
  \[
    e(x) \le O\left( |x|(\log |x|)^2 \right) \,.
    \]
\end{theorem}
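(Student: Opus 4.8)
The plan is to convert the desired pointwise bound into a lower bound on the $m$-measure of the level sets of $e$, and to read off the exponent from two elementary facts. First, a self-delimiting length encoding gives $\K(n)\le \log n + 2\logg n + O(1)$, hence $2^{\K(n)} = O\big(n(\log n)^2\big)$; so it suffices to bound $e(x)$ by $O\big(2^{\K(|x|)}\big)$. Second, applying universality of $m$ to the computable semimeasure $Q(y) = 2^{-|y|}\big(Z\,|y|(\log|y|)^2\big)^{-1}$, where $Z=\sum_{n\ge 2}\big(n(\log n)^2\big)^{-1}<\infty$, yields the uniform estimate $m(y) \ge \Omega\big(2^{-|y|}/(|y|(\log|y|)^2)\big)$. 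It is the convergence of this very series that is responsible for the exponent $2$ in $(\log|x|)^2$.

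I would first isolate the combinatorial content of the test. For rational $k$ the level set $A_k = \{y : e(y) \ge k\}$ is co-c.e., uniformly in $k$ and in the fixed program for $e$, because $e$ is upper semicomputable; fix computable, nonincreasing finite supersets $A_{k,s}\downarrow A_k$. Markov's inequality applied to $\sum_y m(y)e(y)\le 1$ gives $m\big(A_k\cap\{0,1\}^n\big)\le 1/k$. Writing $n=|x|$ and $k=e(x)$, the theorem therefore reduces to the matching lower bound
\[
 A_k\cap\{0,1\}^n \neq \emptyset \ \Longrightarrow\ m\big(A_k\cap\{0,1\}^n\big) \ge c\,2^{-\K(n)},
\]
with $c$ depending only on $e$, since the two estimates combine to $k \le c^{-1}2^{\K(n)} = O\big(n(\log n)^2\big)$. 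This lower bound is in fact equivalent to the theorem, so it carries the whole of the difficulty.

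To produce such mass I would, for each length $n$ separately, try to deposit \emph{lower-semicomputably} a budget of about $2^{-\K(n)}$ that is forced to land inside the surviving set $A_k\cap\{0,1\}^n$; this is affordable because $\sum_n 2^{-\K(n)}\le 1$, and universality of $m$ then converts the deposited budget into the bound, for if a lower semicomputable semimeasure $\rho$ supported on $A_k\cap\{0,1\}^n$ deposits total mass $\ge 2^{-\K(n)}$ there, then $\rho\le c\,m$ gives $m\big(A_k\cap\{0,1\}^n\big)\ge c^{-1}2^{-\K(n)}$. A natural carrier is the lexicographically least element of $A_{k,s}\cap\{0,1\}^n$: it is nondecreasing in $s$ and stabilises to a genuine survivor, so mass placed on it over time drifts down onto that survivor. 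Note that the Corollary by itself is useless here: for an exotic string such as a prefix $\Omega_n$ of Chaitin's number it only gives $\log e \le \K(\Omega_n)-\KH(\Omega_n) = n - O(\log n)$, hugely larger than the truth, so the sumtest must be used in an essential way.

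The step I expect to be the main obstacle is the deposit itself. The stage at which the carrier stabilises, and equally the stage at which $m_s$ has essentially converged, are not computably bounded, so a naive lower-semicomputable deposit cannot be stopped from leaking onto strings that are eventually discarded --- and this uncontrolled convergence time is exactly what lets an upper-semicomputable test perch on a high-$\K$ exotic string. What is available to tame it is the sumtest in stagewise form, $k\,m_s\big(A_{k,s}\cap\{0,1\}^n\big)\le 1$: at every stage the current superset is light in the current measure, which should let me schedule the deposit against a computable, budget-controlled clock rather than the halting oracle. Turning this stagewise lightness into a guarantee that a fixed fraction of a $2^{-\K(n)}$ budget survives on the limit set --- equivalently, ruling out a nonempty co-c.e. level set of $m$-measure $\ll 2^{-\K(n)}$ --- is the heart of the matter, and I would fall back on the halting-relativised a priori probability together with the relation $m^{\0}(y)=\Theta\big(\liminf_t m(y\mid t)\big)$ from the Corollary to locate the stabilisation stages if the purely computable scheduling proves too weak.
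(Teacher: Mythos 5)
Your reduction rests on a false lemma, so the approach cannot be completed as planned. The claimed ``matching lower bound'' --- that $A_k\cap\{0,1\}^n\neq\emptyset$ forces $m\bigl(A_k\cap\{0,1\}^n\bigr)\ge c\,2^{-\K(n)}$, hence $e(x)\le O\bigl(2^{\K(|x|)}\bigr)$ --- is strictly stronger than the theorem, not equivalent to it as you assert, and it is refuted by Theorem~\ref{th:largeTest} of this very paper: there is an upper semicomputable sumtest $e$ with $e(x)\ge\Omega\bigl(n/(\log n)^5\bigr)$ at some $x$ of \emph{every} large length $n$, in particular at lengths $n=2^j$, where $\K(n)\le\K(j)+O(1)\le\logg n+O(\loggg n)$ and therefore $2^{\K(n)}\le(\log n)^{O(1)}\ll n/(\log n)^5$. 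Concretely, for such $n$ take $k=\Omega\bigl(n/(\log n)^5\bigr)$: your Markov step gives $m\bigl(A_k\cap\{0,1\}^n\bigr)\le 1/k\le O\bigl((\log n)^5/n\bigr)$, while your lemma would demand mass at least $2^{-\K(n)}\ge(\log n)^{-O(1)}$, a contradiction. The structural reason the deposit cannot work is visible in your own fallback: the lexicographically least survivor $x^*$ of a co-c.e.\ subset of $\{0,1\}^n$ is computable from $n$ with oracle $\0$, so what you get is $\KH(x^*)\le\K(n)+O(1)$, i.e.\ $m^\0$-mass about $2^{-\K(n)}$ on the limit set, never $m$-mass; the gap between the two is exactly the quantity $\K(x)-\KH(x)$ of Corollary~\ref{cor:high_u_muchHaltingInformation}, and a test may legitimately perch on a survivor where that gap is about $\log n$. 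Note also that $\sum_{|x|=n}m(x)=\Theta\bigl(2^{-\K(n)}\bigr)$ (the left side is lower semicomputable in $n$ with sum at most $1$, and dominates $m(0^n)$), so your lemma would assert that every nonempty level set captures a constant fraction of the \emph{entire} $m$-mass of $\{0,1\}^n$. No budget-controlled scheduling can repair this, since the target statement is false; the stagewise lightness $k\,m_s\bigl(A_{k,s}\cap\{0,1\}^n\bigr)\le 1$ works against you, not for you, as it only caps the mass from above.

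The true bound scales with $|x|$ rather than $2^{\K(|x|)}$ because the resource that controls an upper semicomputable test is different from the one you counted: it is the number of doublings of the approximation $m_t(x)$, which is about $2|x|$, not $\K(|x|)$. The paper's proof first reduces to the generic tests $u_h$ of Theorem~\ref{th:u_hierarchy}, arranges $m_1(x)\ge c2^{-2|x|}$ so that $m(x)/m_1(x)\le O\bigl(2^{2|x|}\bigr)$, and then builds a computable-in-spirit (recursively enumerable) sequence of stages $t_1,t_2,\dots$ such that for each $i$ either $m_{t_{i+1}}(x)/m_{t_i}(x)\ge 2$ or $u_h(x)<2ci(\log i)^2$; since at most about $2|x|$ indices among $i\in[|x|,4|x|)$ can be doubling stages, some $i\le 4|x|$ gives $u_h(x)\le O\bigl(|x|(\log|x|)^2\bigr)$. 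The factor $i(\log i)^2$ enters through $m(t_i)\ge\Omega\bigl(m(i)\bigr)\ge\Omega\bigl(1/(i(\log i)^2)\bigr)$ up to constants --- the same Kraft-style normalization you used for $Q$, but applied to the stage index rather than to the length. Your Markov inequality and the estimate $2^{\K(n)}=O\bigl(n(\log n)^2\bigr)$ are correct but carry none of the difficulty.
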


\begin{proof}
 By  Theorem~\ref{th:u_hierarchy}, it suffices to show the theorem for $u_h$ for all computable $h$.  
 Note that $\sum_x 2^{-2|x|} = \sum_n 2^{-n} = 1$, thus for each universal semimeasure~$m$ 
 there exist $c>0$ such that $m(x) > c2^{-2|x|}$.
 Assume that $m_t(\cdot)$ is an approximation from below of such $m(\cdot)$ such that 
 $m_1(x) \ge c2^{-2|x|}$.  

 The idea is as follows. We consider some times $t_1, t_2, \dots$ 
 on which $m(t)$ is large. 
 (For any number $n$ let $m(n)$ be 
 the universal semimeasure for the string containing $n$ zeros.)
 This implies that 
 $m_{h(x,t_i)}(x|t_i)$ is not much above $m_{t_{i+1}}(x)$ if $t_{i+1}$ 
 is sufficiently above $t_i$. From the definition of $u_h(x)$ with $t = t_i$ 
 it follows that if $u_h(x)$ is large, then also $m_{t_{i+1}}(x)/m_{t_i}(x)$ must be large. 
 On the other hand, $m(x)/m_1(x)$ is bounded, thus the first 
 ratio can only be large for few $t_i$. On the other hand, 
 our construction implies that for large $u_h(x)$ the first ratio must be large 
 for many $t_i$.

 We show the following claim

 \textit{ 
 For all computable $h$ there exist a series of numbers $t_1, t_2, t_3, \dots$ and a constant $c > 0$ 
 such that for all $i \ge |x|$ either 
 \[
 \frac{m_{t_{i+1}}(x) }{m_{t_i}(x)} \ge 2
    \quad\quad \text{or} \quad\quad
 u_h(x) < 2ci(\log i)^2 \,.
 \]
 }
 Let us first show how this implies the theorem. 
 The definition of a semimeasure implies $m(x) \le 1$, thus 
 $
 \frac{m(x)}{m_1(x)} \le O\left(2^{2|x|}\right)
 $ by assumption on $m_1(\cdot)$, and hence
 \[
  \frac{m_{t_2}(x)}{m_{t_1}(x)} \frac{m_{t_3}(x)}{m_{t_2}(x)}\dots
 \frac{m_{t_{4|x|}}(x)}{m_{t_{4|x|-1}}(x)}  \le \frac{m(x)}{m_1(x)} \le O\left(2^{2|x|}\right)\,.
   \] 
   For large $x$, at most $2|x|+O(1) < 3|x|$ elements in the series $t_{|x|}, t_{|x| + 1},
   \dots, t_{4|x|-1}$ can satisfy the left condition. Thus, some element does not satisfy the condition 
   and hence
 \[
   u_h(x) < 2c\left(4|x|\right)\left(\log (4|x|)\right)^2\,.
   \] 
 This implies the theorem.


 \medskip
 We now construct a sequence $t_1, t_2, t_3,\dots$ satisfying the conditions of the claim.
 This construction  depends on a parameter $c$, which will be chosen later. Let $t_1 = 1$.  
 For $i \ge 1$, $t_{i+1}$ is given by the first stage in the approximation of~$m(\cdot)$ such that
 \begin{equation}\label{eq:construction_t}
   \frac{m_{h(x,t_i)}(x|t_i)}{ i(\log i)^2} \le c\cdot m(x) \,,
 \end{equation}
 for all $x$ of length at most~$i$.

 We first argue why for appropriate $c$, such a stage $t_{i+1}$ exist, i.e. why 
  \eqref{eq:construction_t} holds for all $x$.
 Note that the sequence is recursively enumerated uniformly in $c$,
 thus $m(c)/(i(\log i)^2) \le O(m(i)m(c)) \le O(m(t_i))$. 
 On the other side 
 \[
 m_{h(x,t_i)}(x|t_i)m(t_i) \le O(m(x))\,,
 \]
 thus for some $c'$ independent of $c$, $i$ and $x$:
 \[
 \frac{m_{h(x,t_i)}(x|t_i) \,m(c)}{ i(\log i)^2} \le c'\cdot m(x) \,.
 \]
 and~\eqref{eq:construction_t} is satisfied if $c \ge c'/m(c)$. 
 This relation holds if we choose $c$ to be a large power of two 
 (indeed $m(2^l) \le \alpha/l^2$ for some $\alpha>0$, thus choose $l$ such that $2^l \ge c'l^2/\alpha$).

 It remains to show the claim. 
 Assume that the right condition is not satisfied and choose $t = t_i$ in the definition of $u_h$: 
 \begin{eqnarray*}
  2ci(\log i)^2 \le u_h(x) &\le& \frac{m_{h(x,t_i)}(x|t_i)}{m_{t_{i}}(x)} 
  = \frac{m_{h(x,t_i)}(x|t_i)}{m_{t_{i+1}}(x)}\frac{m_{t_{i+1}}(x)}{m_{t_{i}}(x)} \\
 & \le &  ci (\log i)^2 \frac{m_{t_{i+1}}(x)}{m_{t_{i}}(x)}\,.
 \end{eqnarray*}
 This implies the left condition.
\end{proof}

\section{Construction of large upper semicomputable tests}

For each pair $(f,g)$ of computable functions, an upper semicomputable function 
$e_{f,g}$ is constructed. Afterwards, it is shown that $\sum_x m(x)e_{f,g}(x) \le O(1)$  for appropriate~$f$.
Finally, we construct $g$ such that $e_{f,g}$ equals $|x|/(\log |x|)^5$ for infinitely many $x$.
Before constructing $e_{f,g}$, we show a technical lemma.

\begin{lemma}\label{lem:t_k}
 There exists a sequence of numbers $t_1,t_2,\dots$ such that for all $k$
 \[
 \sum_x \left\{m(x) : \frac{m(x)}{m_{t_k}(x)} \ge 2 \right\} \le 2^{-k+1} 
 \]
 and such that for some computable function $f$ and large $k$
 \[
   m_{f(t_k)}(x) \le 2^{-k}/k^2 \,.
 \]
\end{lemma}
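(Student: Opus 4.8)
The plan is to choose each $t_k$ by controlling the total amount of $m$-mass that the approximation has not yet uncovered by stage $t_k$, and then to obtain the computable function $f$ of the second clause by a separate lookahead argument.

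First I would prove the mass bound. Put $M := \sum_x m(x) \le 1$. Since $m_t(x) \nearrow m(x)$ pointwise, monotone convergence gives $\sum_x m_t(x) \nearrow M$, so the residual mass $R(t) := \sum_x\bigl(m(x)-m_t(x)\bigr) = M - \sum_x m_t(x)$ decreases to $0$. On the bad set $B_k(t):=\{x : m(x) \ge 2\,m_t(x)\}$ we have $m_t(x) \le \tfrac12 m(x)$, hence $m(x) \le 2\bigl(m(x)-m_t(x)\bigr)$; summing over $B_k(t)$ yields
\[
\sum_{x\in B_k(t)} m(x) \;\le\; 2\sum_{x\in B_k(t)}\bigl(m(x)-m_t(x)\bigr) \;\le\; 2R(t).
\]
It then suffices to pick $t_k$ so large that $R(t_k)\le 2^{-k}$, which is possible because $R(t)\to 0$; this gives exactly the first displayed inequality.

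For the second clause I would exploit the slow decay of the universal semimeasure. Recall the standard estimate used already in the proof of Theorem~\ref{th:upperbound}, $m(2^l)=O(1/l^2)$, equivalently $m(n)=O\bigl(1/(\logg n)^2\bigr)$ for the string $0^n$; indexing the levels so that the $k$-th level sees semimeasure values of order $2^{-k}/k^2$ is precisely what makes the factor $1/k^2$ appear and what keeps the series $\sum_k 2^{-k}/k^2$ summable, which is what the later construction of $e_{f,g}$ needs in order to bound $\sum_x m(x)e_{f,g}(x)$. Since the literal bound $m_{f(t_k)}(x)\le 2^{-k}/k^2$ cannot hold for every $x$, I would read it for the restricted family of strings targeted by the construction, and to turn it into a \emph{computable} $f$ I would define $f(t)$ as the first stage $s\ge t$ at which the (computable) approximation $m_s$ falls on the correct side of the threshold $2^{-k}/k^2$ on those strings; as this is a semidecidable condition on $m_s$, the resulting $f$ is total and computable, while the stages $t_k$ themselves are chosen only existentially.

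The main obstacle I expect is exactly this computability mismatch. The criterion defining $t_k$, namely $R(t_k)\le 2^{-k}$, refers to the total mass $M$, which is only lower semicomputable, so $(t_k)$ need not be a computable sequence; yet the second clause demands a genuinely computable $f$ that certifies the threshold bound from $t_k$ alone. Reconciling these — producing $f$ from a semidecidable test on the approximations, without ever computing $t_k$ or $M$ — is the delicate point, and arranging the two thresholds $2^{-k+1}$ and $2^{-k}/k^2$ so that a single choice of $(t_k)$ and $f$ serves both clauses simultaneously is where the real work lies.
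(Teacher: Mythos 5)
Your proof of the first clause is correct, and it is in substance the paper's own argument: the paper instantiates your residual mass $R(t)$ as $\Omega-\Omega_{t_k}$ for Chaitin's $\Omega$, and your doubling estimate $\sum_{B}m(x)\le 2R(t_k)$ is exactly the computation there. The gap is in the second clause, which is the real content of the lemma. First, you have misread the (admittedly typo-laden) statement: as its use in Proposition~\ref{prop:e_is_sumtest} shows (``One can calculate that $m_{f(t_k)}(t_k)\ge 2^{-k}/k^2$''), the intended clause is a \emph{lower} bound at the single point $x=t_k$, asserting that $t_k$ is a \emph{simple} number --- $m(t_k)\ge 2^{-k}/k^2$ up to a constant --- and that a computable $f$ certifies this by a terminating wait. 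Your plan of keeping ``$\le$'' on a restricted family of strings is therefore the wrong target, and your waiting mechanism is backwards: since $m_s$ increases in $s$, the event ``$m_s(z)\le\text{threshold}$'' holds initially and may later fail, so it is not something one can wait for; only lower bounds ``$m_s(z)\ge\text{threshold}$'' are certified by waiting, and such a wait terminates only if $m(z)$ genuinely exceeds the threshold --- which you never establish.

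Nor can you establish it with your choice of $t_k$; this is the decoupling you yourself flag as ``where the real work lies,'' and it is fatal. Picking $t_k$ existentially from $R(t)\to 0$ gives no control over the complexity of $t_k$: since $R$ is decreasing, \emph{every} sufficiently large $t$ satisfies clause 1, so the sequence could be chosen with $K(t_k)\gg k$, whence $m(t_k)\approx 2^{-K(t_k)}\ll 2^{-k}/k^2$ and no function $f$, computable or not, can satisfy clause 2 for that sequence. The paper couples the two clauses through a single canonical definition: let $k_t$ be the position of the leftmost bit in which $\Omega_t$ differs from $\Omega_{t-1}$, and set $t_k=\max\{t: k_t\le k\}$, the last stage at which the first $k$ bits of $\Omega_t$ change. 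Then (i) $\Omega-\Omega_{t_k}\le 2^{-k}$, so clause 1 follows by your own residual computation; (ii) $t_k$ is computable from the first $k$ bits $y$ of $\Omega_{t_k}$ (wait for the first stage whose approximation has prefix $y$; monotonicity of $\Omega_s$ makes this stage exactly $t_k$), hence $m(t_k)\ge c\,m(y)\ge c'\,2^{-k}/k^2$; and (iii) since $k_t$ is computable from $t$, one may define $f(t)$ as the first stage $s$ with $m_s(t)\ge 2^{-k_t}/k_t^2$, a wait that terminates for all large $t$ by (ii), making $f$ total and computable, and at $t=t_k$, where $k_{t_k}\le k$, it yields precisely $m_{f(t_k)}(t_k)\ge 2^{-k}/k^2$. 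Some such simultaneously ``late and simple'' choice of $t_k$ is the missing idea; without it the two clauses cannot be met by one sequence.
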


\begin{proof}
  The proof of the lemma is closely related to the proof that strings with high Kolmogorov complexity
  of Kolmogorov complexity given the string are rare~\cite{GacsNotes} (see~\cite{msoph} for more on
  the technique). 
  The construction of $t_k$ uses the function $k_t$, which is in turn
  defined using an approximation from below for the famous number $\Omega$ (see~\cite{Chaitin75}):
  \[
  \Omega = \sum_x m(x) \quad \quad \text{and} \quad \quad 
  \Omega_t = \sum_{|x| \le t} m_t(x)\,.
  \]
  For each $t$ let $k_t$ be the position of the leftmost bit of $\Omega_t$ (in binary) that 
  differs from $\Omega_{t-1}$. Note that $k_t$ tends to infinity for increasing $t$.
  Let 
  \[
  t_k = \max\left\{ t : k_t \le k \right\}\,, 
  \]
  i.e. the largest $t$ for which there is a change in the first $k$ bits of $\Omega_t$.
  Clearly, 
  \[
  \sum_x m(x) - \sum_{|x|\le t_k} m_{t_k}(x) = \Omega - \Omega_{t_k} \le 2^{-k}\,.
  \]
  and this inequality implies the first inequality of the lemma.
  \medskip

  For the second, we show that there exist a computable $f$ such that for all $t$:
  \[
  m_{f(t)}(t) \le O\left( 2^{-k_t}/k_t^2\right) \,.
  \]
  Indeed, given the first $k_t$ bits $y$ of $\Omega_t$ we can compute $t$ 
  (by waiting until the first stage $s$ such that $y$ is a prefix of $\Omega_s$).
  This implies $m(y) \le O(m(t))$. 
 Note that $2^{-|z|}/|z|^2 \le o(m(z))$ for all $z$, 
 thus 
 \[ 
   2^{-k_t}/k_t^2 \le m(t)
 \]
 for large $t$.
 $k_t$ is computable from $t$ and we can wait until the current approximation of $m(t)$ 
 is large enough to satisfy the equation. Let this stage be $f(t)$. Note that it 
 is defined for all $t$, 
 and hence $f$ is computable and satisfies the inequality at the start of the paragraph.
\end{proof}

\bigskip

Let $e_{f,g}(x)$ be equal to $|x|/(\log |x|)^5$ if 
for all $t$ either
    \[ m_{f(t)}(t) \le \frac{6\log |x|}{|x|}
    \quad\quad \text{or} \quad\quad 
    \frac{m_{g(x,t)}(x)}{m_t(x)} \ge 2\,,
\]
otherwise let $e_{f,g}(x) = 1$.

\begin{proposition}\label{prop:e_is_sumtest}
  For some $c > 0$, some computable $f$ and for all computable $g$ the function $c \cdot e_{f,g}$ 
  is an upper semicomputable sumtest for~$m$.
\end{proposition}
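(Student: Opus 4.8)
The plan is to verify the two requirements separately: upper semicomputability of $e_{f,g}$, and $\sum_x m(x)\,e_{f,g}(x) = O(1)$ for the fixed $f$ supplied by Lemma~\ref{lem:t_k} and every computable $g$. Upper semicomputability is the easy half. The value $e_{f,g}(x)$ equals $|x|/(\log|x|)^5$ until we discover a ``bad'' stage $t$, namely one with $m_{f(t)}(t) > 6\log|x|/|x|$ and simultaneously $m_{g(x,t)}(x) < 2m_t(x)$, at which point it drops to $1$. For each fixed $t$ all three quantities $m_{f(t)}(t)$, $m_{g(x,t)}(x)$ and $m_t(x)$ are computable, so I would enumerate $t=1,2,\dots$, start the approximation at $|x|/(\log|x|)^5$, and lower it to $1$ as soon as a bad $t$ appears. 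Since $1 < |x|/(\log|x|)^5$ for large $x$, this approximation is non-increasing, so $e_{f,g}$ is upper semicomputable.

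For the sumtest bound I would first remove the dependence on $g$. Write $A = \{x : e_{f,g}(x) = |x|/(\log|x|)^5\}$; the strings outside $A$ contribute $\sum_{x\notin A} m(x)\le \Omega\le 1$. For $x\in A$ the second alternative at stage $t$ reads $m_{g(x,t)}(x)\ge 2m_t(x)$, and as $m_{g(x,t)}(x)\le m(x)$ this forces $m(x)/m_t(x)\ge 2$. Hence $A\subseteq A^\star$, where
\[
A^\star = \Bigl\{x : \forall t\ \bigl(m_{f(t)}(t)\le \tfrac{6\log|x|}{|x|}\ \text{ or }\ \tfrac{m(x)}{m_t(x)}\ge 2\bigr)\Bigr\},
\]
a set independent of $g$. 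It therefore suffices to bound $S = \sum_{x\in A^\star} m(x)\,|x|/(\log|x|)^5$ by a constant. The heart of the matter is to exhibit, for each length $n$, one stage at which the first alternative fails. I take $f$ from Lemma~\ref{lem:t_k} and use that its construction waits until $m_{f(t_k)}(t_k)\ge 2^{-k}/k^2$, so I have the \emph{lower} bound $m_{f(t_k)}(t_k)\ge 2^{-k}/k^2$. Let $k(n)$ be the largest $k$ with $2^{-k}/k^2 > 6\log n/n$, equivalently $2^k k^2 < n/(6\log n)$. Then for every $x$ with $|x|=n$ the stage $t_{k(n)}$ makes the first alternative false, so membership in $A^\star$ forces $m(x)/m_{t_{k(n)}}(x)\ge 2$, and Lemma~\ref{lem:t_k} bounds the total $m$-measure of such strings by $2^{-k(n)+1}$.

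Finally I would sum, and the key point is to group lengths by the value of $k(n)$ rather than summing over $n$ directly. For fixed $k$ put $N_k = \{n : k(n) = k\}$; solving $2^k k^2 \asymp n/(6\log n)$ gives $\log n = k + O(\log k)$ and hence $n = O(2^k k^3)$ throughout $N_k$, so $n/(\log n)^5\le C\,2^k/k^2$ for a universal constant $C$. Moreover $\{x\in A^\star : |x|\in N_k\}\subseteq\{x : m(x)/m_{t_k}(x)\ge 2\}$ has $m$-measure at most $2^{-k+1}$. Combining,
\[
S \le \sum_k C\,\frac{2^k}{k^2}\sum_{x\in A^\star,\,|x|\in N_k} m(x) \le \sum_k C\,\frac{2^k}{k^2}\,2^{-k+1} = 2C\sum_k \frac{1}{k^2} = O(1),
\]
which together with the contribution from $x\notin A$ gives $\sum_x m(x)\,e_{f,g}(x)=O(1)$, so $c\cdot e_{f,g}$ is a sumtest for small enough $c$.

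Two places will need care. First, the direction of the estimate from Lemma~\ref{lem:t_k}: the argument genuinely requires $m_{f(t_k)}(t_k)\gtrsim 2^{-k}/k^2$ (to make the first alternative fail), which is exactly what the ``wait until the approximation is large enough'' construction guarantees, even though the lemma is phrased as an upper bound. Second, and this is the main obstacle, the regrouping step: weighting each length-$n$ measure bound $2^{-k(n)+1}\asymp (\log n)^3/n$ by $n/(\log n)^5$ yields $1/(\log n)^2$ per length, which is \emph{not} summable over $n$; convergence appears only after collecting all lengths sharing a common $k(n)$, where the number of such lengths ($\asymp 2^k k^3$) and the per-$k$ measure ($2^{-k+1}$) conspire to produce a $\sum 1/k^2$ series.
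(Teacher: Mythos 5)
Your proof is correct, and up to the last step it follows the same strategy as the paper's own proof: you take $f$ and the sequence $(t_k)$ from Lemma~\ref{lem:t_k}, use the lower bound $m_{f(t_k)}(t_k)\ge 2^{-k}/k^2$ (you are right that the displayed $\le$, with $x$ in place of $t_k$, is a typo in the lemma's statement --- the ``wait until the approximation of $m(t)$ is large enough'' step in its proof delivers precisely this lower bound, and the paper's proof of the proposition invokes it with $\ge$), pick for each length $n$ the stage $t_{k(n)}$ at which the first alternative fails, and conclude via $m_{g(x,t)}(x)\le m(x)$ that any surviving $x$ of length $n$ satisfies $m(x)/m_{t_{k(n)}}(x)\ge 2$; your explicit $g$-independent superset $A^\star$ just makes the quantifier ``for all computable $g$'' transparent, which the paper leaves implicit. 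Where you genuinely diverge is the final summation, and there your version is not merely different but necessary: the paper multiplies the per-length measure bound $2^{-k(n)+1}=O((\log n)^3/n)$ by the weight $n/(\log n)^5$ and asserts $\sum_n 1/(\log n)^2 \le O(1)$, which is false --- that series diverges. Your regrouping by the common value $k=k(n)$ repairs this: since Lemma~\ref{lem:t_k} bounds the measure of the set of \emph{all} strings with $m(x)/m_{t_k}(x)\ge 2$, irrespective of length, the roughly $2^k k^3$ lengths in $N_k$ share the single budget $2^{-k+1}$ rather than each consuming it, and together with $n/(\log n)^5 = O(2^k/k^2)$ on $N_k$ (valid because $\log n > k$ and $n = O(2^k k^3)$ there) this produces the convergent $\sum_k 1/k^2$. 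So your argument proves the proposition as stated, with the same exponent $5$, while fixing a genuine flaw in the paper's concluding estimate; the remaining loose ends (the finitely many $x$ with $|x|/(\log|x|)^5 < 1$ in the semicomputability argument, and the ``for large $k$'' proviso of the lemma, which affects only finitely many lengths and hence an $O(1)$ contribution) are handled at least as carefully as in the paper.
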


\begin{proof}[Proof of Proposition~\ref{prop:e_is_sumtest}.]
 $e_{f,g}$ is upper semicomputable. Indeed, for at most finitely many $x$ the relation $|x|/(\log
 |x|)^5 \ge 1$ is false;
 for all other $x$, let the test be equal this value until we find a $t$ that
 does not satisfy the conditions.
 It remains to construct $f$ such that 
 \[
 \sum_x e_{f,g}(x)m(x) \le O(1)\,. 
 \]

 For some $x$ of length $n$, let $k = \log n - 3\log n - 3$, and suppose that $t_k$ and $f$ satisfy the
 conditions of Lemma~\ref{lem:t_k}. 
 One can calculate that $m_{f(t_k)}(t_k) \ge 2^{-k}/k^2 \ge (6\log n)/n$. 
 Thus the set of all $x$ of length $n$ such that $e_{f,g}(x) > 1$ has measure at most
 \[
   2^{-k + 1} \le O\left( \frac{(\log n)^3}{n}  \right)\,.
 \]
 We are now ready to bound the sum at the beginning of the proof. 
 It is sufficient to consider $x$ such that $e_{f,g}(x)>1$.
 \[
 \sum_x \left\{ e_{f,g}(x)m(x): e_{f,g}(x) > 1 \right\} 
 \le  \sum_n \frac{n}{(\log n)^5}\sum_{|x|=n} \left\{m(x): e_{f,g}(x)>1 \right\} \,,
 \]
 by the above bound on the measure of the sets in the right sum, this is at most
 \[
 \le O(1) \sum_n \frac{n}{(\log n)^5} \frac{(\log n)^3}{n} \le O(1)\sum_n \frac{1}{(\log n)^2} \le O(1)\,.
 \qedhere
 \]
\end{proof}

\begin{theorem}\label{th:largeTest}
  There exist an upper semicomputable sumtest for $m$ that exceeds 
  \[ 
    \Omega\left(|x|/(\log |x|)^5\right)
    \] 
  for some~$x$ of all large lengths.
\end{theorem}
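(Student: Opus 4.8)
The plan is to invoke Proposition~\ref{prop:e_is_sumtest}: with its fixed $f$ and a suitable constant $c$, the function $c\cdot e_{f,g}$ is an upper semicomputable sumtest for $m$ for \emph{every} computable $g$, so it suffices to construct one computable $g$ for which, for each large length $n$, some string $x$ of length $n$ receives the large value $e_{f,g}(x)=n/(\log n)^5$. Unwinding the definition of $e_{f,g}$, this value is attained exactly when, at every \emph{critical} time $t$ (meaning $m_{f(t)}(t) > 6\log n/n$), the approximation of $m(x)$ doubles by stage $g(x,t)$, i.e. $m_{g(x,t)}(x)\ge 2 m_t(x)$. Thus the problem reduces to designing $g$ together with a mechanism to force these doublings at a chosen string.

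First I would bound the number of critical times. Since $\sum_t m_{f(t)}(t)\le\sum_t m(t)\le 1$ and each critical $t$ contributes more than $6\log n/n$, there are fewer than $n/(6\log n)$ critical times for length $n$. As $m(x)$ ranges between its minimal value (at least $c2^{-2n}$, as in the proof of Theorem~\ref{th:upperbound}) and $1$, it can be doubled on the order of $n$ times; since $n/(6\log n)$ is comfortably smaller, there is enough room, in principle, to force one doubling per critical time on a single reserved string. To actually force the doublings I would build a lower semicomputable semimeasure $\mu$ and use universality of $m$ in $\Plsc$: by the recursion theorem $\mu$ can be constructed knowing a bound $d$ on the length of its own program, so that $m(x)\ge 2^{-d}\mu(x)$ for all $x$ with $d$ \emph{known}. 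Processing the critical times of length $n$ in increasing order, each time a new critical time $t$ is observed at the current candidate $x$ I raise $\mu(x)$ to $2^{d+1}m_t(x)$; this guarantees $m(x)\ge 2^{-d}\mu(x)\ge 2m_t(x)$, and I set $g(x,t)$ to the first stage at which $m_s(x)\ge 2m_t(x)$ is observed (for non-critical $t$ the value of $g$ is a harmless default, since the first disjunct already holds). The resulting $g$ is total computable, since testing criticality and simulating the finite construction up to any query are effective.

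The main obstacle is the semimeasure budget $\sum_x\mu(x)\le 1$: external contributions to $m(x)$ coming from programs other than $\mu$ can inflate $m_t(x)$ and make the required pump $2^{d+1}m_t(x)$ far too expensive. I would handle this with a movable marker over the $2^n$ candidate strings of length $n$. Track an \emph{envelope} $\omega_j=(2^{d+2})^j 2^{-n}$ bounding how large a purely self-pumped value can grow after $j$ rounds, and abandon the current candidate (moving the marker to the next string, which catches up on the earlier critical times from scratch) as soon as its value exceeds $\omega_j$ at round $j$. Two things then need checking. First, a survivor exists: a candidate is abandoned only when external mass exceeding $2^{d+1}\omega_{j-1}\ge 2^{d+1}2^{-n}$ lands on it, and since $\sum_{|x|=n}m(x)\le 1$ at most $2^{n-d-1}$ strings can be this heavy, leaving survivors among the $2^n$ candidates. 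Second, the total weight stays bounded: the weight wasted on each abandoned candidate is at most the external mass that displaced it, so summing over all lengths the total waste is at most $\sum_x m(x)\le 1$, while each survivor costs only $2^{d+1}\omega_{|C_n|}=2^{-n+O(n/\log n)}$, which is summable; rescaling $\mu$ by a constant (absorbed into $c$) keeps it a semimeasure.

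Putting these together, the surviving string $x^*$ of each large length $n$ satisfies $m_{g(x^*,t)}(x^*)\ge 2m_t(x^*)$ at every critical time and the first disjunct everywhere else, so $e_{f,g}(x^*)=n/(\log n)^5$ and $c\cdot e_{f,g}(x^*)=\Omega\!\left(n/(\log n)^5\right)$, which is the theorem. I expect the delicate point to be the combined accounting in the last paragraph — separating my own pumped contribution from the external one using the \emph{known} constant $d$, and charging abandonment waste to external mass so that both the existence of survivors and the summability of the total weight hold simultaneously.
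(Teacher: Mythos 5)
Your skeleton matches the paper's: you reduce to Proposition~\ref{prop:e_is_sumtest}, bound the number of critical times by $n/(6\log n)$ via $\sum_t m_{f(t)}(t)\le\sum_t m(t)\le 1$, select one string per large length, and let $g$ wait for a forced increase of the approximation of $m$. But your forcing mechanism has a genuine gap: Proposition~\ref{prop:e_is_sumtest} requires $g$ to be \emph{total} computable, and the $g$ you describe is not. Your doublings are guaranteed only at the current marker position. For a critical $t$ and a string $y$ lying beyond the final resting place of the marker, the wait for the first stage $s$ with $m_s(y)\ge 2m_t(y)$ may never terminate: nothing ever pumps $\mu(y)$, and $m(y)$ need never double past $m_t(y)$. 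There is no effective escape clause either, since whether the marker will ever reach or pass $y$ is a $\Pi_1$ question; and any decidable default you install for such $y$ (say, ``output $t$ whenever the marker is not at $y$ at stage $t$'') fires equally for the true survivor $x^*$ at critical times occurring \emph{before} the marker arrives at $x^*$ --- at stage $t$ the two situations are indistinguishable --- which destroys exactly the doublings you need at $x^*$. Your ``catch-up from scratch'' adjusts $\mu$, but it cannot retroactively change the already-determined value $g(x^*,t)$, because $g$ is a single algorithm that must halt on every pair. So the single-candidate pump cannot be made into a total $g$; by contrast, your budget accounting (envelope, charging waste to external mass) looks essentially workable and is not where the problem lies.

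The paper's construction is designed to avoid precisely this obstruction. At each critical time $t_i$ it deletes from a lexicographic list all strings of length $n$ with $\log(1/m_{t_i}(x))\le n-6i\log n$ and takes $x$ to be the first string never deleted (fewer than $2^{n-6\log n+1}$ strings are ever deleted, so $x$ exists). The key observation is that this survivor has small \emph{rank}: at stage $t_i$ it lies among the first $2^{n+1-6(i+1)\log n}$ strings of the current list, because everything preceding it is deleted at later stages. Feeding the uniform distribution $P(\cdot\,|i,n)$ on this prefix, weighted by $1/(i^2n^2)$, into the universality of $m$ yields $c\,m(z)\ge P(z|i,n)/(i^2n^2)$ for \emph{every} $z$ in the (computable) list; hence $g(y,t)$ can simply wait until this inequality is witnessed for all listed strings, and it halts on \emph{all} inputs, while at the survivor it delivers $m_{g(x,t_i)}(x)\ge 2^{-n+6i\log n+2\log n-O(1)}\ge 2\,m_{t_i}(x)$ for large $n$. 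In other words, the paper secures mass for all potential candidates simultaneously rather than for one moving candidate, which is what makes $g$ total --- and it does so with a plain, non-self-referential auxiliary semimeasure, needing no recursion theorem, no movable marker, and no abandonment accounting. Any repair of your argument would, in effect, have to reintroduce this prefix-uniform trick.
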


\begin{proof}
  By  Proposition~\ref{prop:e_is_sumtest} it suffices to construct
  a computable $g$ and an $x$ of each large length such that $e_{f,g}(x) = |x|/(\log |x|)^5$. 
  Our construction works for any function~$f$.

  Fix a large~$n$.
  By construction of $e_{f,g}$, the function exceeds one on some $x$ of length $n$,  
  only if  $\log (1/m_t(x))$ increases each time $m_{f(t)}(t)$ is large. 
  Let $t_1, t_2, \dots$ be the set of all $t$ such that $m_{f(t)}(t) < 6(\log n)/n$. Clearly, 
  there are less than $n/(6\log n)$ such $t$ and the sequence can be enumerated uniformly in~$n$. 

  \medskip
  To construct $x$ we maintain a lexicographically ordered list that initially contains all strings
  of length~$n$.
  At stages $t_i$ we remove all $x$ from the list for which
  \[
    \log (1/m_{t_i}(x)) \le n - 6i\log n \,.
  \]
  Let $x$ be the first string in the list that is never removed.
  Such $x$ exist for large $n$, because in total there are less than $2^{n - 6\log n + 1}$ strings removed.
  \medskip

  By construction
  \[
    \log \left(1/m_{t_i}(x)\right) > n - 6i\log n\,. 
    \]
  Now we argue, why there is a computable $g$ such that 
  \begin{equation}\label{eq:gaolLarge}
    \log \left(1/m_{g(x,t_i)}(x)\right) \le n - 6i\log n - 2\log n + O(1)
  \end{equation}
  for all $i$ such that $t_i$ exists; and by construction of $e_{f,g}$ 
  this is sufficient for the theorem.
  After stage $t_i$, we remove at most 
  \[
  \exp (n - 6(i+1)\log n) + \exp(n - 6(i+2)\log n) + \ldots   < \exp (n + 1 - 6(i+1)\log n)
  \]
  strings.  Let $P(y|i,n)$ be the uniform distribution over the first $\exp (n + 1 - 6(i+1)\log n)$
  strings that remain in the lexicographically ordered list. 
  Note that we have $i < n/(6\log n)$, thus this list is never empty, and hence contains $x$.
  By universality there exist a $c$ such that for all strings $y$
  \[
   c \cdot m(y) \ge P(y|i,n)/(i^2n^2) \,.
  \]
  Now we construct the function $g(y,t)$. From $t$ and $n$ we can compute the largest $i$ such that 
  $t_i \le t$ if such $i$ exist (for $t < t_i$ the value of $g(y,t)$ may be anything).
  From $t_i$ and $i$ we can compute the lexicographically ordered list at stage $t_i$ 
  and evaluate $P(z|i,n)$ for all $z$ in the list. 
  Then we wait for the stage until the equation above is satisfied 
  for all $z$ in the list. Let this stage be $g(y,t)$.
  This implies 
  \[
    c \cdot m_{g(x,t_i)}(x) \ge P(x|i,n)/(i^2n^2) \ge \exp \left(-n - 1 + (6i+1)\log n\right)/n^4\,, 
  \]
  and this implies~\eqref{eq:gaolLarge}.
\end{proof}

The next corollary follows from the proof above.

\begin{corollary}\label{cor:nouniversal}
 There exist no test that is universal in the set of upper semicomputable sumtests for~$m$. 
\end{corollary}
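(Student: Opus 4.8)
The plan is to argue by contradiction and to reuse the forcing technique from the proof of Theorem~\ref{th:largeTest}. Suppose some upper semicomputable sumtest $u$ for $m$ were universal, so that every such test $e$ satisfies $e\le c_e\cdot u$ for a constant $c_e$. I would construct a single upper semicomputable sumtest $e$ for $m$ with $\sup_x e(x)/u(x)=\infty$, contradicting universality. Fix summable weights $\beta_j$ (say $\beta_j=1/j^2$) and a fast growing sequence of lengths $n_j$ with $V_j\beta_j\to\infty$, where $V_j=n_j/(\log n_j)^5$.

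At length $n_j$ I would use the construction of Theorem~\ref{th:largeTest} to \emph{force a pool} $S_j$ of $N_j=\beta_j/\mu_j$ strings to become heavy, i.e.\ $m(x)\ge\mu_j$ for $x\in S_j$, by enumerating from below a semimeasure that invests total weight $\beta_j$ on $S_j$ (dominated by $m$ up to a constant); here $\mu_j>0$ may be taken as small as convenient. On each $x\in S_j$ the doubling condition defining $e_{f,g}$ is kept alive exactly as in Theorem~\ref{th:largeTest}, so by default $e$ equals $V_j$ on the whole pool.

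The decisive point is the interaction with $u$. Since $u$ is a sumtest, $\sum_x m(x)u(x)\le1$, and hence at most $1/(\lambda\mu_j)$ members of $S_j$ can satisfy $u(x)\ge\lambda$. Choosing $\lambda=2/\beta_j$ gives $1/(\lambda\mu_j)=N_j/2$, so at least half of the pool satisfies $u(x)<2/\beta_j$; crucially this bound is \emph{independent of} $\mu_j$. Because $u$ is upper semicomputable I can \emph{find} such a member effectively: enumerating an approximation $u_s\downarrow u$ from above, I wait for the first $x_j\in S_j$ whose current value drops below $2/\beta_j$, which must happen since some genuine member has $u(x_j)<2/\beta_j$. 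Once $x_j$ is identified I let the doubling condition fail on every other member of $S_j$ (by ceasing to double it), so that $e$ drops to $1$ there, while $e(x_j)=V_j$ is retained.

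It then remains to check that the resulting $e$ is an upper semicomputable sumtest after rescaling: its final weight is $\sum_j\big(\mu_jV_j+(N_j-1)\mu_j\big)\le\sum_j(\mu_jV_j+\beta_j)<\infty$, once $\mu_j$ is chosen small enough that $\mu_jV_j$ is summable. At each witness $e(x_j)/u(x_j)>V_j\beta_j/2\to\infty$, so no constant $c$ yields $e\le c\cdot u$, contradicting universality. I expect the main obstacle to be the simultaneous bookkeeping: one must keep $e$ large by default on the \emph{entire} pool (so that the $e_{f,g}$ template stays upper semicomputable) yet commit genuine $m$-weight only to the single identified winner, and the certification $u(x_j)<2/\beta_j$ must be read off from the approximation from above. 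This is exactly the place where the sumtest counting bound applied to the \emph{forced-heavy} pool, rather than to arbitrary strings, is essential, since on light strings $u$ could remain large everywhere and the approximation from above would never reveal a small value.
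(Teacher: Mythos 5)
There is a genuine gap, and it sits exactly at the point you flag as ``the main obstacle'': the bookkeeping between the doubling condition and the identification of $x_j$ does not close. In the $e_{f,g}$ template the checkpoints are the stages $t$ with $m_{f(t)}(t) > 6\log n/n$; there may be up to $n/(6\log n)$ of them, and they are scheduled by $m$ and $f$, not by you. To keep $e$ ``large by default on the entire pool'' you must actually make $m_t(x)$ double on every pool member at every checkpoint that occurs before $u_s$ certifies $u(x_j)<2/\beta_j$. Since you forced $m(x)\ge\mu_j$ on all $N_j$ members, each such round multiplies the pool's total $m$-mass, which is at least $\beta_j$, by $2$; you can afford only about $\log(1/\beta_j)$ rounds, while the approximation from above may take arbitrarily many checkpoints to reveal a small value. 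One can try to repair the pre-identification phase by deferring: since $g(x,t)$ may be chosen long after $t$, a single late pump from $\mu_j$ to $2\mu_j$ witnesses all earlier checkpoints at once -- but you do not say this, and it still leaves the post-identification phase exposed. Once $x_j$ is a single, effectively identified string, a lower semicomputable adversary inside $m$ can raise $m(x_j)$ modestly and then schedule a handful of further checkpoints at total cost $O\left((\log n_j)^2/n_j\right)$, summable over your sparse lengths, making the remaining doublings on $x_j$ unaffordable, so $e(x_j)$ collapses to $1$. The proof of Theorem~\ref{th:largeTest} is immune to exactly this attack because its witness is ``the first string never removed'' from an exponentially large list, which the adversary lacks the budget to exhaust; your witness is pinned by the $u_s$-certification and hence targetable. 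Note also an adverse correlation if you instead run the paper's removal dynamics on the pool: since $\sum_x m(x)u(x)\le 1$, the strings with small $u$ are precisely those that $m$ can afford to inflate, i.e.\ precisely those the dynamics removes, so the survivor may well have $u\ge 2/\beta_j$.

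A second, more local error: your weight estimate $\sum_j\left(\mu_j V_j+(N_j-1)\mu_j\right)$ treats $\mu_j$ as an upper bound on $m(x_j)$, but the forcing only gives $m(x_j)\ge\mu_j$; the doubling pumps and the adversary can push $m(x_j)$ far higher, so this sum does not bound $\sum_x m(x)e(x)$. If you genuinely stay within the $e_{f,g}$ template, Proposition~\ref{prop:e_is_sumtest} already gives the sumtest property for \emph{every} computable $g$ and the calculation is unnecessary; as written it is wrong. For comparison, the paper's proof sidesteps the whole identification problem via Theorem~\ref{th:u_hierarchy}: given a candidate universal test $e$, it picks a computable $h$ with $e\le O(u_h)$ and restricts the initial list to the at least $2^{n-1}$ strings with $m_{h(x,1)}(x|1)\le 2^{-n+1}$; evaluating $u_h$ at $t=1$ (against $m_1(x)\ge\Omega(2^{-n})$) then certifies $e=O(1)$ on \emph{every} pool member simultaneously, so the survivor of the Theorem~\ref{th:largeTest} dynamics is automatically a good witness -- no waiting on $u_s$, no committed single string, and no fresh sumtest verification. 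Your counting observation (at least half of a forced-heavy pool has $u<2/\beta_j$, independently of $\mu_j$) is correct and is a nice idea, but the proposal does not bridge it to an affordable, untargetable construction, and that bridge is the substance of the proof.
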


\begin{proof}
 We show that for every test $e$, we can construct $f,g$ and infinitely many~$x$, such that 
 $e(x) \le O(1)$ and  $e_{f,g}(x) = |x|/(\log |x|)^5$.
 Remind the construction of $u_h$ and by Theorem~\ref{th:u_hierarchy} there exist 
 a computable $h$ be such that $e(x) \le O(u_h(x))$.

 Let $m_t(\cdot)$ be an approximation of $m(\cdot)$ from below such that 
 $m_1(x) \ge \Omega\left(2^{-|x|}\right)$. Now, for every $n$, 
 we follow the construction of $x$ from the proof above with the following
 modification: we do not start from a list of all $x$ of length $n$, 
 but from all $x$ of length $n$ such that
 \[ 
 m_{h(x,1)}(x|1) \le 2^{-n+1}\,.
  \] 
  There are less than $2^{n-1}$ strings with $m(x|1) > 2^{n-1}$, thus the list
  constains at least $2^n - 2^{n-1} = 2^{n-1}$ strings, and this is sufficient for the proof above.
 Let $t=1$ in the definition of $u_h$. This implies
 \[
  u_h(x) \le \frac{m_{h(x,1)}(x|1)}{m_1(x)} \le \frac{2^{-n+1}}{\Omega\left( 2^{-n} \right)} \le O(1) \,,
   \]
 thus $e(x) \le O(1)$. On the other hand, we can follow the proof above to construct $g$ and $x$ 
 such that $e_{f,g}(x) = n/(\log n)^5$.
 (The function $f$ is still obtained from  Lemma~\ref{lem:t_k}. 
 On the other hand, the function $g$ might be larger than in the proof above, and depends 
 on $h$. Indeed, it equals the stage on which $m(x)$
 increases sufficiently above $2^{-n+1}$, and this time is related to $h(x,1)$.)
\end{proof}

\bibliography{bib,eigen,statisticalCausalities,practCausalities,kolmogorov}
\end{document}